\DeclareMathOperator{\E}{\mathbb{E}}
\title[Identifiability and sensitivity analysis of synthetic control models]{Non-parametric identifiability and sensitivity analysis of synthetic control models}
\newtheorem{thm}{Theorem}[section]
\newtheorem{define}{Definition}[section]
\begin{document}

\maketitle

\begin{abstract}%
Quantifying cause and effect relationships is an important problem in many domains. The gold standard solution is to conduct a randomised controlled trial. However, in many situations such trials cannot be performed. In the absence of such trials, many methods have been devised to quantify the causal impact of an intervention from observational data given certain assumptions. One widely used method are \emph{synthetic control} models. While identifiability of the causal estimand in such models has been obtained from a range of assumptions, it is widely and implicitly assumed that the underlying assumptions are satisfied for all time periods both pre- and post-intervention. This is a strong assumption, as synthetic control models can only be learned in pre-intervention period. In this paper we address this challenge, and prove identifiability can be obtained without the need for this assumption, by showing it follows from the principle of invariant causal mechanisms. Moreover, for the first time, we formulate and study synthetic control models in Pearl's structural causal model framework. Importantly, we provide a general framework for sensitivity analysis of synthetic control causal inference to violations of the assumptions underlying non-parametric identifiability. We end by providing an empirical demonstration of our sensitivity analysis framework on simulated and real data in the widely-used linear synthetic control framework.
\end{abstract}

\begin{keywords}%
  Synthetic Control, Sensitivity Analysis, Structural Causal Models
\end{keywords}

\section{Introduction}

%**Update the following to include the need for and overview of identifiabity proof. Highlight two main conceptual ingredients of this proof: 1) Donor units are proxies of underlying latent drivers of change, and 2) invariant causal mechanism assumption needed to link pre- and post-intervention periods. This was missing from all previous work. But this requirement is needed for diff-in-diff. Discuss how identifiability proof allows us to derive approach to sensitivity analysis.**

Understanding and quantifying cause and effect relationships is a fundamental problem in numerous domains, from science to medicine and economics---see \cite{gilligan2020causing, richens2020improving, lee2017causal, jeunen2022disentangling, dhir2020integrating, reynaud2022d, gilligan2022leveraging, perov2020multiverse, vlontzos2021estimating}. The generally-accepted gold standard solution to this problem is to conduct a randomised controlled trial, or A/B test. However, in many situations such trials cannot be performed; they could be unethical, exorbitantly expensive, or technologically infeasible. In the absence of such trials, many methods have been developed to infer the causal impact of an intervention or treatment from observational data given certain assumptions. One of the most widely used causal inference approaches in economics \cite{abadie2010synthetic}, marketing \cite{brodersen2015inferring}, and medicine \cite{kreif2016examination} are \emph{synthetic control} methods. 

%The synthetic control method relies on information from both the time evolution of the target unit and other correlated units in order to assess the impact of the intervention. Specifically, it first observes the time-series of the variable of interest before and after the intervention. Further it combines time-series data of other predictor variables correlated with the target variable---but not impacted by the intervention---together with time-series data of the target variable before the intervention to build a \emph{synthetic control} time-series. The method is aimed at approximating the evolution of the target variable in a counterfactual world where the intervention did not occur, all else being equal. By comparing the observed, factual, post-intervention time-series to the counterfactual one, the causal impact of the intervention can be ascertained.

To concretely illustrate synthetic controls, consider the launch of an advertising campaign in a specific geographic region, aimed to increase sales of a product there. To estimate the impact of this campaign, the synthetic control method uses the number of sales of the product in different regions, where no policy change was implemented, to build a model which predicts the pre-campaign sales in the campaign region. This model is then used to predict product sales in the campaign region in the counterfactual world where no advertising campaign was launched. By comparing the model prediction to actual sales in that region after the campaign was launched, one can estimate its impact. 

In the standard synthetic control set-up, the model is taken to be a weighted, linear combination of sales in the no-campaign regions. To train the model, one needs to determine the weights for sales in each no-campaign region that minimise the error when predicting the sales in the campaign region before the campaign was launched. The linearity of the model is justified by assuming an underlying linear factor model for all regions, or units, that is the same for all time periods, both before and after the intervention. Recent work by \cite{shi2022assumptions} has removed the need for the linear factor model assumption and proven identifiability from a non-parametric assumption: that units are aggregates of smaller units. This assumption is reasonable in situations like our advertising campaign example, where total sales in a region is just the aggregate of sales from each individual in that region. However, in many applications, this assumption does not apply. In medicine for instance, patients are not generally considered to be aggregates of smaller units. When the aggregate unit assumption can't be justified, can the causal effect of an intervention on a specific unit be identified from data about ``similar'' units not impacted by the intervention?

Returning to our example, the reason sales in different regions provide good synthetic control candidates is that the causes of sales in most regions are very similar, consisting of demographic factors, socioeconomic status of residents, and so on. Informally, sales in ``similar'' regions act as \emph{proxies} for these, generally unobserved, causes of sales in the campaign region. That is, before the campaign, the causes of sales in the campaign region are also causes of sales in the no-campaign region---they are \emph{common} causes of the campaign and no-campaign regions. This relationship between the target variable and synthetic control candidates is illustrated as a directed acyclic graph, or DAG, in Figure~\ref{fig:dag1}. \cite{shi2021proximal} combined this formulation with results from the proximal causal inference literature to prove one can identify the causal effect of an intervention on the target unit from data about the proxy units not impacted by the intervention. See  \cite{tchetgen2020introduction} for an overview of proximal causal inference. Hence, in our example, observing sales in multiple no-campaign regions allows one to predict the contemporaneous evolution of sales in the campaign region in the absence of the campaign without needing any linearity assumptions.

However, in all previous identifiability proofs, it is implicitly assumed that the underlying assumptions are satisfied for all time periods, both pre- and post-intervention (see assumption 3'' in \cite{shi2021proximal}, and assumption A2 in \cite{shi2022assumptions}). This is a strong assumption, as models can only be learned in pre-intervention period. That is, one of the main assumptions underlying the validity of synthetic control models is that there is no unobserved heterogeneity in the relationship between the target and the control time-series observed in the pre-intervention period. Such unobserved heterogeneity could, for instance, be due to unaccounted-for causes of the target unit. %This assumption is similar to the ignorability assumption encountered in the potential outcomes approach to causal inference, where one assumes all confounders are observed. As such it is important to prove identifiability still holds when this assumption is removed.

In this paper we address this challenge, and prove identifiability can be obtained without the need for the requirement that assumptions hold for all time periods before and after the intervention, by proving it follows from the principle of \emph{invariant causal mechanisms}. Moreover, for the first time, we formulate and study synthetic control models in Pearl's structural causal model framework.

As the assumptions underlying our identifiability proof cannot be empirically tested---as with all causal inference results---it is vital to conduct a formal sensitivity analysis to determine robustness of the causal estimate to violations of these assumptions. In propensity-based causal inference for instance, sensitivity analysis has been conducted to determine how robust propensity-based causal estimates are to the presence of unobserved confounders, see \cite{veitch2020sense} for an overview. These sensitivity analyses derived a relationship between the influence of unobserved confounders and the resulting bias in  causal effect estimation. This understanding allows one to bound bias in causal effect estimation as a function of unobserved confounder influence. From this a domain expert can offer judgments of the bias due to plausible levels of unobserved confounding.

However, despite the importance of this problem---and the wide use of synthetic control methods in many disciplines---general methods for sensitivity analysis of synthetic control methods are under-studied. This work's contributions seek to remedy this discrepancy and provide a general framework for sensitivity analysis of synthetic control causal inference to violations of the assumptions underlying our non-parametric identifiability proof. %The sensitivity to unobserved heterogeneity will be formulated in Pearl's framework of structural causal models \cite{pearl2009causality} using proxy variables, discussed in subsequent sections. 

In summary, our main contributions are as follows:
\begin{enumerate}
\item We formulate  synthetic control models in Pearl's structural causal model framework.
    \item We provide a non-parametric identifiability proof in Pearl's structural causal model framework that doesn't require assumptions to be satisfied before and after the intervention. Our proof relies on the invariant causal mechanism principle.
    \item We provide a general framework for sensitivity analysis of synthetic control causal inference to violations of the assumptions underlying our non-parametric identifiability proof.
    \item We empirically demonstrate our sensitivity analysis approach on real-world data.
\end{enumerate}

\paragraph{Paper Organisation}
As discussed in the introduction, our goal is to identify the causal effect of an intervention, or treatment, on the unit to which it was applied using data from ``similar'' units not impacted by the treatment. First we overview related work, then formulate synthetic control models in Pearl's structural causal model framework, where we prove identifiability using results from proximal causal inference and the assumption that causal mechanisms are invariant. Finally, we provide a formal sensitivity analysis when the assumptions of our identifiability proof fail.

\section{Related work}
\paragraph{\textbf{Identifiability of synthethic controls}} The standard approach to synthetic control models uses the assumption that the data is generated by an underlying linear factor model to derive prove the counterfactual is identified as a linear combination of units not impacted by the treatment, see \cite{abadie2003economic,abadie2010synthetic}. %, and extensions to it have been considered using Bayesian structural time-series \cite{brodersen2015inferring}.
Recent work in \cite{shi2022assumptions} proved that this linearity  emerges in a non-parametric manner if treatment and control units are coarse-grainings of ``smaller'' units, and if causal mechanisms are independent. Recent work by \cite{shi2021proximal}, removed the need for this ``aggregate unit'' assumption, and proved that the counterfactual can be identified as a function of the control units---but this function need not be linear. The result of \cite{shi2021proximal} uses tools from the proximal causal inference literature, see \cite{tchetgen2020introduction} for an overview. Initial proximal causal inference results were reported in \cite{kuroki2014measurement,miao2018identifying}, and have since been developed further and used in long-term causal effect estimation \cite{imbens2022long}. See \cite{shpitser2021proximal} for a formulation of proximal causal inference in the graphical causal inference framework. We note that all the aforementioned works are formulated in the potential outcomes framework for causal inference. Moreover, as mentioned previously, in all these works it is taken that the underlying assumptions are satisfied for all time periods. % (see assumption 3'' in \cite{shi2021proximal}, and assumption A2 in \cite{shi2022assumptions}). 
This is a strong assumption, as models can only be learned in pre-treatment period.

% However, in all of the above identifiability proofs, it is implicitly assumed that the underlying assumptions are satisfied for all time periods (see assumption 3'' in \cite{shi2021proximal}, and assumption A2 in \cite{shi2022assumptions})---both pre- and post-intervention. This is a strong assumption, as models can only be learned in pre-treatment period. Moreover, 

\paragraph{\textbf{Invariant causal mechanisms}} As mentioned, in this paper we prove identifiability can be obtained without the need for this assumption, by showing it follows from the principle of \emph{invariant causal mechanisms}.% Moreover, for the first time, we formulate and study synthetic control models in Pearl's structural causal model framework. 
 Causal mechanisms are invariant if they take the same form in different domains, even though the data distributions may vary with domain. Previous work on invariant causal mechanisms can be found in \cite{mitrovic2020representation,guo2022causal,wang2022out,chevalley2022invariant}. Importantly, this principle is related---yet distinct from---the principle of \textit{independent} causal mechanisms, which says that the mechanism that maps a cause to its effect is independent of the distribution of the cause in a given domain \cite{parascandolo2018learning,stegle2010probabilistic}. In the independent causal mechanism principle, the mechanism itself need not be the same across domains, it just cannot contain information about the distribution of the cause.

\paragraph{\textbf{Sensitivity analsysis}} Later in the paper, we use our identifiability proof to formally investigate synthetic control models from a sensitivity analysis standpoint for the first time. Previous work on sensitivity analysis has investigated omitted variable bias in propensity-based models. This sensitivity analysis work originated in \cite{imbens2003sensitivity,rosenbaum1983assessing} with modern extensions in \cite{veitch2020sense} and \cite{cinelli2020making,cinelli2019sensitivity}.

\section{Methods}
\subsection{Non-parametric identifiability from proxies and invariant causal mechanisms} \label{section: Pearl SCM}

\begin{figure}[t]
\centering
\subfigure[
    \label{fig:dag1}]{
        \includegraphics[scale=0.35]{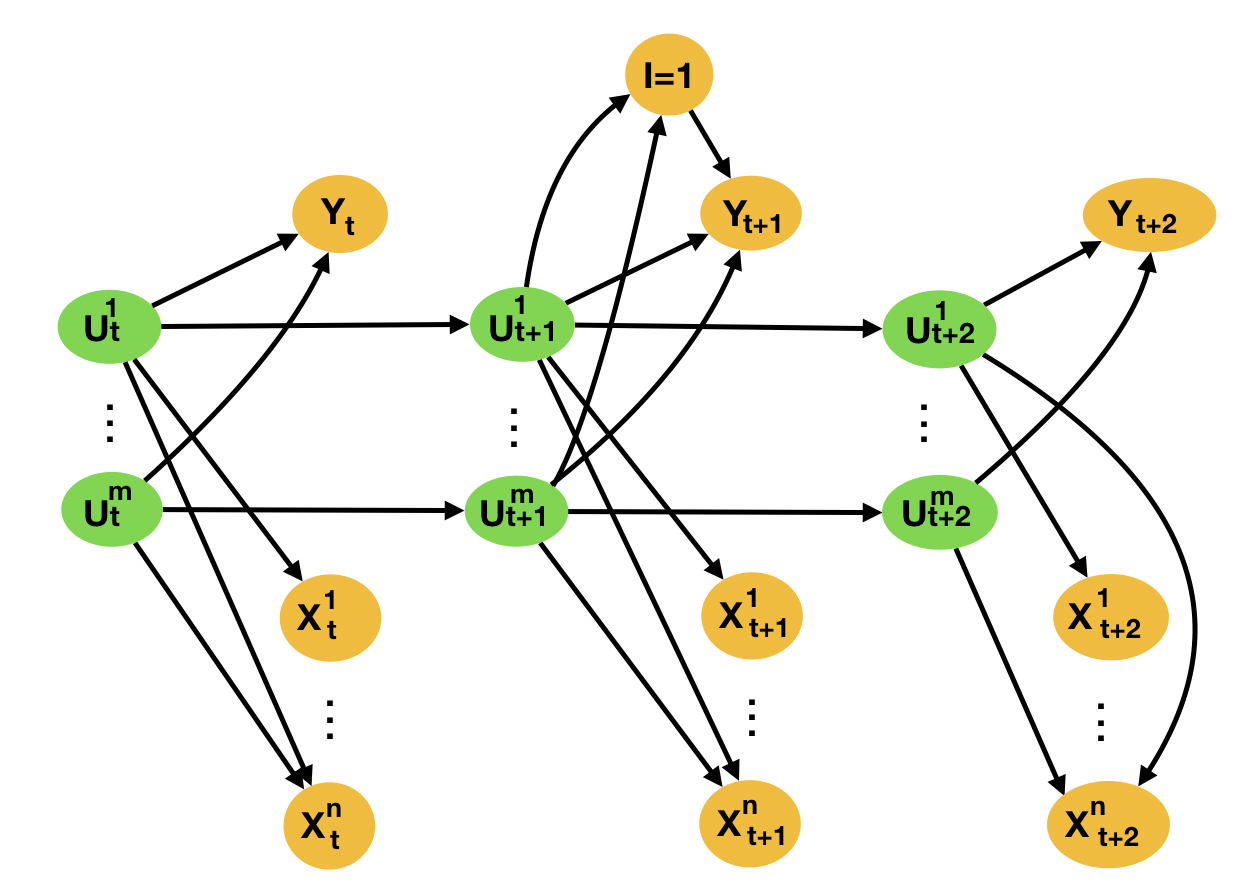}
 }\quad
    \subfigure[
    \label{fig:twin net}]{
        \includegraphics[scale=0.35]{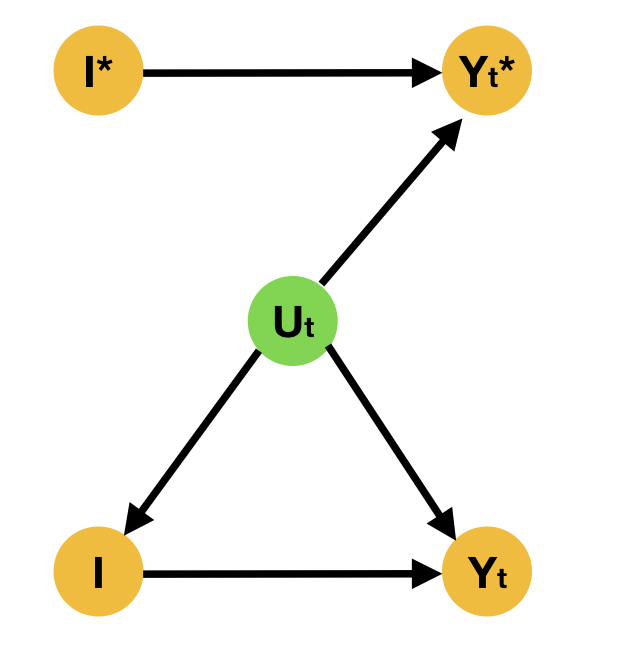}
        }
        \caption{(a) DAG for synthetic control method. Orange nodes are observed variables, green latent. Intervention is applied at timepoint $t+1$. For simplicity, $I$ is taken to be $0$ for all time points before $t+1$, and $1$ for all timepoints from $t+1$ on-wards. (b) Twin network for target unit. Superscript $^*$ denotes counterfactual world.}
    \end{figure}
    
    %Our proof has two main conceptual ingredients:
    %\begin{itemize}
        %\item Donor units are proxies of underlying latent causes of the treated unit
        %\item invariant causal mechanism assumption needed to link pre- and post-intervention periods
    %\end{itemize}
    We work in the structural causal model framework of \cite{pearl2009causality}. We now will present our definition of a synthetic control structural causal model, define invariant causal mechanisms, and formally define proxy variables following the proximal causal inference literature of \cite{tchetgen2020introduction}. 
    
    \begin{define}[Synthetic Control Structural Causal Model   ]
    A synthetic control structural causal model consists of \textbf{a set of latent variables $U$} and their distributions, \textbf{a set of observed variables $\mathbf{Y,X, I}$} representing the target unit, donor units, and the intervention, and a \textbf{set of deterministic functions} mapping parents to their children in the causal structure in Figure~\ref{fig:dag1}(a), represented as a directed acyclic graph (DAG), each indexed by a specific time point $t$, such that:
       \begin{enumerate}
           \item $u^t=m^t(u^{t-1}, \epsilon^t_{u})$ where $\epsilon^t_{u}$ is an independent, exogenous error term with $\epsilon^t_{u}\sim P(\epsilon^t_{u})$ 
           \item $y^t=g^t(u^t, I^t, \epsilon^t_{y})$ where $\epsilon^t_{u}$ is an independent, exogenous error term with $\epsilon^t_{y}\sim P(\epsilon^t_{y})$ 
           \item $x^t_i=f^t(u^t, \epsilon^t_{x_i})$ where $\epsilon^t_{x_i}$ is an independent, exogenous error term with $\epsilon^t_{u}\sim P(\epsilon^t_{x_i})$ 
       \end{enumerate}
       For simplicity, we follow \cite{zhang2022can} and suppress the functional dependence on the exogenous error terms.  
    \end{define}
    
    The above formulation in terms of structural causal models generalises the standard formulation of synthetic controls in terms of linear factor models. For instance, $y^t$ is considered an arbitrary function of $u^t, I^t$, rather than a linear function of them.  In what follows we treat $x^t, y^t$ as the variables we are concerned with. Sometimes we abuse notation and use the same $x^t, y^t$ to denote the values those variables take. The difference will be clear from the context.
    
    %**Define causal mechanisms more precisely and show how they relate to conditional distributions of the variable in question conditioned on all it's parents, both exogenous and endogenous. Invariance of causal mechanisms means that they no longer depend on time** 
    
    The collection of functions and distribution over latent variables induces a distribution over observable variables: $P^t(x^t) = \sum_{u^t} P^t(x^t \mid u^t)P^t(u^t) = \sum_{u^t} \delta_{f^t(u^t)=x^t}P^t(u^t). $ Where $\delta_{f^t(u^t)=x^t} = 1$ when $u^t$ satisfies $f^t(u^t)=x^t$, and $0$ otherwise. For any variable in a causal model, its \emph{causal mechanism} is the deterministic function that determines it from its parents in the causal structure. This function is equivalent to the conditional distribution of that variable given it's parents. For instance:
    $ x^t=f^t(u^t) \iff P^t(x^t \mid u^t)=1.$
    
     \begin{define}[Invariant causal mechanisms]\label{definition: inv causal mech}
       In the context of synthetic control causal models, a causal mechanism is said to be \emph{\textbf{invariant}} if it doesn't depend on the time point $t$.
    \end{define}
    
    The structural causal model framework allows us to define (strong) interventions via the \emph{do-operator}, which replaces the original causal mechanism with assignment of that variable to a specific value, disconnecting the intervened variable from its parents in the causal structure \cite{pearl2009causality}.
    
    To formally define when a collection of variables are to be considered proxies for other variables, we need the following completeness condition.

    \begin{define}[Completeness condition for proxy variables]\label{definition: completeness}
       For any square integral function $f$, if $\mathbb{E}\left(f(x^t_1, \dots, x^t_N) \mid u^t \right) = 0$ then $f(x^t_1, \dots, x^t_N)=0$ for any $t$.
    \end{define}

This completeness condition characterizes how much ``information'' the $x^t$ have about the $u^t$, in the sense that $x^t$ have sufficient variability relative to the $u^t$---that is, any variation in $u^t$ is captured by variation in $x^t$. Such completeness conditions are widely assumed in recent proximal causal inference literature \cite{tchetgen2020introduction}, and under these conditions the $x^t$ can be viewed as proxy variables for the latents $u^t$. 

To quantify the impact of an intervention $I=1$ on unit $y$ at time $t$, we must estimate the effect of treatment on the treated:
$$\underbrace{\mathbb{E}^t\left(y^t \mid \text{do}(I^t=1), I^t=1 \right)}_{\text{Observed}} -  \mathbb{E}^t\left(y^t \mid \text{do}(I^t=0), I^t=1 \right)$$
As we observe the first term, all that is required is to identify and estimate the second term.
    
The below Theorem~\ref{theorem: for a single time point} and proof is based on Theorem~4 in \cite{shi2021theory}. The main difference is our assumptions and the causal framework we work in. We work in Pearl's graphical causal model framework, where independence of causal variables follow from graphical conditions in the given causal structure, represented as a DAG. Indeed, even conditional independence of counterfactual variables can be seen to follow from graphical requirements---this time by considering the structure of the \emph{twin network} associated with the causal structure, see \cite{vlontzos2021estimating,graham2019copy} for an overview of twin networks. \cite{shi2021theory} work in the potential outcomes framework, and thus require explicit assumptions for various conditional independence statements. Additionally, Theorem~4 in \cite{shi2021theory} assumed the existence of a function that maps the control units to the target unit that is consistent and unchanged across all time points. We do not make this assumption. Rather in our Theorem~\ref{theorem: singe h for all time points} we remove this assumption and prove that such a function\footnote{Existence of such a function for a single timepoint follows from proximal causal inference, as shown in Theorem~\ref{theorem: for a single time point}.} is the same for all time points if causal mechanisms are invariant. 

For simplicity, we will denote the collection of donor units at timepoint $t$, $\{x^t_1, \dots, x^t_N\}$, by $x^t.$
    
\begin{thm}\label{theorem: for a single time point}
    There exists a function $h^t$ such that at time point $t$ we have $$\mathbb{E}\left(y^t \mid \text{do}(I^t=0), I^t=1 \right) = \mathbb{E}\left(h^t(x^t, I^t=0) \right),$$
    where $$\mathbb{E}\left(y^t \mid \text{do}(I^t=0), I^t=1 \right) = \int y^t P^t\left(y^t \mid \text{do}(I^t=0), I^t=1\right)dy,$$ and $$\mathbb{E}\left(h^t(x^t, I^t=0) \right) = \int h^t(x^t, I^t=0) P^t\left(x^t\right)dx.$$
    \end{thm}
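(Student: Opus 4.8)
The plan is to treat $h^t$ as an \emph{outcome bridge function} in the sense of proximal causal inference, and to obtain every (conditional) independence it relies on directly from the graphical structure in Figure~\ref{fig:dag1}(a) and its twin network in Figure~\ref{fig:twin net}, rather than positing them as in \cite{shi2021theory}. First I would reduce the counterfactual target to a confounder-adjusted functional. Since $y^t = g^t(u^t, I^t, \epsilon^t_y)$ and $I^t$ shares no parent with $u^t$, reading the twin network shows that the counterfactual outcome obtained under $\text{do}(I^t=0)$ is a function of $u^t$ and $\epsilon^t_y$ alone, hence independent of the factual assignment $I^t$. Conditioning on $I^t=1$ is therefore innocuous, and with $\mathbb{E}(y^t \mid u^t, I^t=0)$ denoting the outcome mechanism evaluated at $I^t=0$ we obtain
\[
\mathbb{E}\!\left(y^t \mid \text{do}(I^t=0),\, I^t=1 \right) = \int \mathbb{E}\!\left(y^t \mid u^t,\, I^t=0\right) P^t(u^t)\, du^t .
\]

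Next I would introduce $h^t$ as a solution of the integral (bridge) equation
\[
\mathbb{E}\!\left(y^t \mid u^t,\, I^t=0\right) = \mathbb{E}\!\left(h^t(x^t,\, I^t=0) \mid u^t\right) = \int h^t(x^t,\, I^t=0)\, P^t(x^t \mid u^t)\, dx ,
\]
where the second equality uses that $x^t = f^t(u^t, \epsilon^t_x)$ carries no dependence on $I^t$, so that $x^t$ and $I^t$ are conditionally independent given $u^t$ (again read off the graph). Existence of such an $h^t$ I would import from the proximal causal inference machinery: the completeness property of Definition~\ref{definition: completeness} guarantees that the proxies $x^t$ resolve all variation in $u^t$, which is the condition under which this first-kind integral equation admits a solution and renders that solution essentially unique (any two solutions differ by a function with vanishing conditional mean, hence by completeness by zero).

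Finally I would substitute and marginalise. Writing the proxy marginal as $P^t(x^t) = \int P^t(x^t \mid u^t) P^t(u^t)\, du^t$ and exchanging the order of integration gives
\[
\mathbb{E}\!\left(h^t(x^t,\, I^t=0)\right) = \int \mathbb{E}\!\left(h^t(x^t,\, I^t=0) \mid u^t\right) P^t(u^t)\, du^t = \int \mathbb{E}\!\left(y^t \mid u^t,\, I^t=0\right) P^t(u^t)\, du^t ,
\]
which coincides with the confounder-adjusted expression from the first step, closing the identity $\mathbb{E}(y^t \mid \text{do}(I^t=0), I^t=1) = \mathbb{E}(h^t(x^t, I^t=0))$.

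The main obstacle is the existence of the bridge function $h^t$: in contrast to the independence statements, which fall out of the DAG and the twin network almost mechanically, existence amounts to solving an inverse problem, and it is here that completeness must do genuine analytic work. I would lean on the proximal results of \cite{shi2021theory} (and the standard regularity/Picard-type conditions in \cite{tchetgen2020introduction}) for the precise conditions rather than re-deriving them. A secondary point demanding care is the rigorous justification, \emph{via} the twin network, that the counterfactual outcome is independent of $I^t$, so that the effect of treatment on the treated collapses onto the population-level confounder-adjusted functional used above.
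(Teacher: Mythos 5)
Your proposal is correct and takes essentially the same route as the paper's proof: completeness (plus Picard-type regularity conditions deferred to the proximal literature) gives existence of an outcome bridge function, the twin network and d-separation supply the counterfactual independence equating $\mathbb{E}\left(y^t \mid u^t, I^t=1, \text{do}(I^t=0)\right)$ with $\mathbb{E}\left(y^t \mid u^t, I^t=0\right)$, the conditional independence $x^t \perp I^t \mid u^t$ is read off the DAG, and marginalising over $u^t$ closes the identity. The only differences are cosmetic: you posit the bridge at the conditional-expectation level, whereas the paper posits it at the conditional-density level, $P^t(y^t \mid u^t, I^t) = \int H^t(y^t, x^t, I^t)\, P^t(x^t \mid u^t, I^t)\, dx$, and then integrates out $y^t$ to define $h^t$; and you perform the counterfactual reduction before introducing the bridge rather than after, neither of which changes the substance of the argument.
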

    
    \begin{proof} 
    In general nonparametric models, the completeness condition of Def.~\ref{definition: completeness} together with some additional technical conditions (see the appendix for these technical conditions) imply the existence of a function\footnote{To gain some intuition about the existence of such functions, a simple example is: $P(Y) = \int P(Y \mid X) P(X)dx = \int H(Y, X) P(X)dx.$} $H^t$ such that
    \begin{equation}\label{equation: existence of H}
        \begin{aligned}
        P^t(y^t \mid u^t, I) &=
        \int H^t(y^t, x^t, I^t) P^t(x^t \mid u^t, I^t)dx 
        \end{aligned}
    \end{equation}
This implies that
 \begin{equation*}
        \begin{aligned}
       \mathbb{E}\left(y^t \mid u^t, I^t \right) &= \iint y^t H^t(y^t,x^t, I^t) P^t(x^t \mid u^t, I^t)dydx  \\
       &\overset{A}{=} \int{P^t(x^t \mid u^t)}\underbrace{\left[\int{y^tH^t(y^t, x^t, I^t)}dy\right]}_{:=h^t(x^t, I^t)}dx\\
       &=\int h^t(x^t, I^t)P^t(x^t \mid u^t, I^t)dx \\
       &\overset{A}{=}\int h^t(x^t, I^t)P^t(x^t \mid u^t, I^t)dx \\
       &\overset{B}{=}\mathbb{E}\left(h^t(x^t, I^t) \mid u^t, I^t \right).
        \end{aligned}
    \end{equation*}
Where on lines $\text{A}$ we use the fact that $x^t$ is independent of $I^t$ conditioned on $u^t$.

Now, consider the following:
   \begin{equation*}
        \begin{aligned}
       \mathbb{E}\left(y^t \mid u^t, I^t=1 , \text{do}(I^t=0)\right) &\overset{C}{=} \mathbb{E}\left(y^t \mid u^t,I^t=0\right) \\
       &\overset{D}{=}\mathbb{E}\left(h^t(x^t, I^t=0) \mid u^t, I^t=0 \right) \\
    &\overset{E}{=}\mathbb{E}\left(h^t(x^t, I^t=0) \mid u^t \right)
        \end{aligned}
    \end{equation*}
    Line $\text{C}$ in the above follows from examining the twin network in Figure~\ref{fig:twin net} and applying d-separation. Line $\text{D}$ is just the application of line $\text{B}$, above. Line $\text{E}$ follows as $x^t$ is independent of $I^t$ given $u^t$.  Marginalising over $u^t$ yields:
    $\mathbb{E}\left(y^t \mid \text{do}(I^t=0), I^t=1 \right) = \mathbb{E}\left(h^t(x^t, I^t=0) \right)$
  \end{proof} 
  
Theorem~\ref{theorem: for a single time point} proved existence of a function mapping control units to the target unit at a given time. We now prove this function is the same for all timepoints if causal mechanisms are invariant.
  
  \begin{thm}\label{theorem: singe h for all time points}
   If causal mechanisms are invariant, then there exists a \emph{unique} function $h$, such that for \emph{\textbf{all}} time points $t$ we have: $$\mathbb{E}\left(y^t \mid \text{do}(I^t=0), I^t=1 \right) = \mathbb{E}\left(h(x^t, I^t=0) \right)$$
  \end{thm}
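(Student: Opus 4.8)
The plan is to exploit the explicit construction of $h^t$ from the proof of Theorem~\ref{theorem: for a single time point} and to show that invariance of the causal mechanisms strips away every dependence on $t$ from that construction. Recall that there $h^t(x^t, I^t) = \int y^t H^t(y^t, x^t, I^t)\, dy$, where $H^t$ solves the integral equation $P^t(y^t \mid u^t, I^t) = \int H^t(y^t, x^t, I^t) P^t(x^t \mid u^t, I^t)\, dx$. The only objects carrying a time index in this defining equation are the two conditional distributions $P^t(y^t \mid u^t, I^t)$ and $P^t(x^t \mid u^t, I^t) = P^t(x^t \mid u^t)$, which are exactly the causal mechanisms $g^t$ and $f^t$ for $y$ and $x$. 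So the strategy is to isolate these as the sole $t$-dependent ingredients and then invoke invariance.

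First I would apply Definition~\ref{definition: inv causal mech}: if the causal mechanisms are invariant then $g^t = g$ and $f^t = f$ as functions, equivalently $P^t(y^t \mid u^t, I^t) = P(y \mid u, I)$ and $P^t(x^t \mid u^t) = P(x \mid u)$ do not depend on $t$. Substituting this into the defining equation for $H^t$, the equation becomes literally the same integral equation for every $t$, with a time-independent kernel $P(x \mid u)$ and time-independent right-hand side $P(y \mid u, I)$. Hence a single solution $H$ serves for all $t$, and therefore $h^t(x, I) = \int y\, H(y, x, I)\, dy =: h(x, I)$ is one and the same function at every time point. Marginalising over $u^t$ as at the end of the proof of Theorem~\ref{theorem: for a single time point} then yields $\mathbb{E}(y^t \mid \text{do}(I^t=0), I^t=1) = \mathbb{E}(h(x^t, I^t=0))$ simultaneously for all $t$.

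For uniqueness I would use the completeness condition of Definition~\ref{definition: completeness}, applied at the level of the conditional relation (line $E$ of the previous proof), which is the relation the construction actually pins down and from which the marginal identity follows. Suppose $h$ and $\tilde h$ both satisfy $\mathbb{E}(h(x^t, I^t=0) \mid u^t) = \mathbb{E}(y^t \mid u^t, I^t=0) = \mathbb{E}(\tilde h(x^t, I^t=0) \mid u^t)$. Subtracting gives $\mathbb{E}\big((h - \tilde h)(x^t, I^t=0) \mid u^t\big) = 0$, and since $x \mapsto (h - \tilde h)(x, 0)$ is a square-integrable function of $x^t$ alone, completeness forces $(h - \tilde h)(\cdot, 0) = 0$. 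Thus $h$ is unique.

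The main obstacle is the substitution step in the second paragraph: I must argue carefully that invariance genuinely renders the defining equation for $H^t$ identical across time, rather than merely making the observable laws resemble one another. The subtlety is that the latent distribution $P^t(u^t)$ — and hence the observed marginal $P^t(x^t)$ that we eventually integrate against — does in general still vary with $t$ through the mechanism $m$, since $u^t = m(u^{t-1}, \epsilon^t_u)$ need not be stationary. The crux is therefore to make explicit that $h$ is determined by $f$ and $g$ alone, which enter only through the conditionals $P^t(x^t \mid u^t)$ and $P^t(y^t \mid u^t, I^t)$ in the defining equation, and not at all by these time-varying marginals. Once this separation between mechanism and marginal is laid bare, the invariance assumption does all the work and uniqueness follows cleanly from completeness.
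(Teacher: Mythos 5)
Your proposal is correct and follows essentially the same route as the paper: invariance of the mechanisms $P^t(y^t \mid u^t, I^t)$ and $P^t(x^t \mid u^t)$ renders the defining integral equation for $H^t$ identical across time, so one solution serves for all $t$, and uniqueness follows from the completeness condition of Definition~\ref{definition: completeness} applied to the difference of two candidate solutions. Your explicit observation that the time-varying marginal $P^t(u^t)$ plays no role—since $h$ is pinned down by the conditionals alone—is left implicit in the paper, but it is the same argument.
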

  \begin{proof}
   All we need to show is that the solution to the integral equation from Eq.~\ref{equation: existence of H} in the proof of Theorem~\ref{theorem: for a single time point} for time point $t$ is also a solution for any other time point $t'$. 
 
 To show this, first reconsider Eq.~\ref{equation: existence of H}:
   $$
        P^t(y^t \mid u^t, I^t) =
        \int H^t(y^t, x^t, I^t) P^t(x^t \mid u^t, I^t)dx. 
    $$
    Consider the left hand side $P^t(y^t \mid u^t, I^t).$ This is the causal mechanism for determining $y^t$. As causal mechanisms are invariant, this means $P^t(y^t \mid u^t, I^t) = P^{t'}(y^{t} \mid u^{t}, I^{t}).$ Moreover, considering the right hand side of the above Eq.~\ref{equation: existence of H}, as $x^t$ is independent of $I^t$ conditioned on $u^t$:
     $P^t(x^t \mid u^t, I^t) = P^{t}(x^t \mid u^t),$
    which is the causal mechanism for determining $x^t$. Again, as causal mechanisms are invariant, one has that  $ P^t(x^t \mid u_t) = P^{t'}(x^{t} \mid u^{t}).$
    
    Hence a solution to the integral equation for one time point $t$, is a solution for any other time point $t'$. All that remains is to prove uniqueness of $h^t$ for a given time point, as this will imply there exists a unique function $h$ for all time points via the above argument. Suppose there are two functions that are each solutions to line $\text{E}$:
    $\mathbb{E}\left(h^t(x^t, I^t) \mid u^t, I^t \right) = \mathbb{E}\left(\widetilde{h^t}(x^t, I^t=1) \mid u^t, I^t=1 \right)$
    As $x^t$ is independent of $I^t$ conditioned on $u^t$ we have:
    $\mathbb{E}\left(h^t(x^t, I^t=1) - \widetilde{h^t}(x^t, I^t=1) \large\mid u^t \right) = 0$
    
    As this is the expectation of a function of $x^t$ conditioned on $u^t$, the completeness condition in Definition~\ref{definition: completeness} implies that $h^t(x^t, I^t) = \widetilde{h^t}(x^t, I^t),$ completing the proof.
  \end{proof}
 In the standard synthetic control case, $h$ is a linear function of the proxies, as in \cite{abadie2010synthetic}.

\subsection{Sensitivity analysis and bias when identifiability fails}

    \begin{figure}[t]
        \centering
        \includegraphics[scale=0.35]{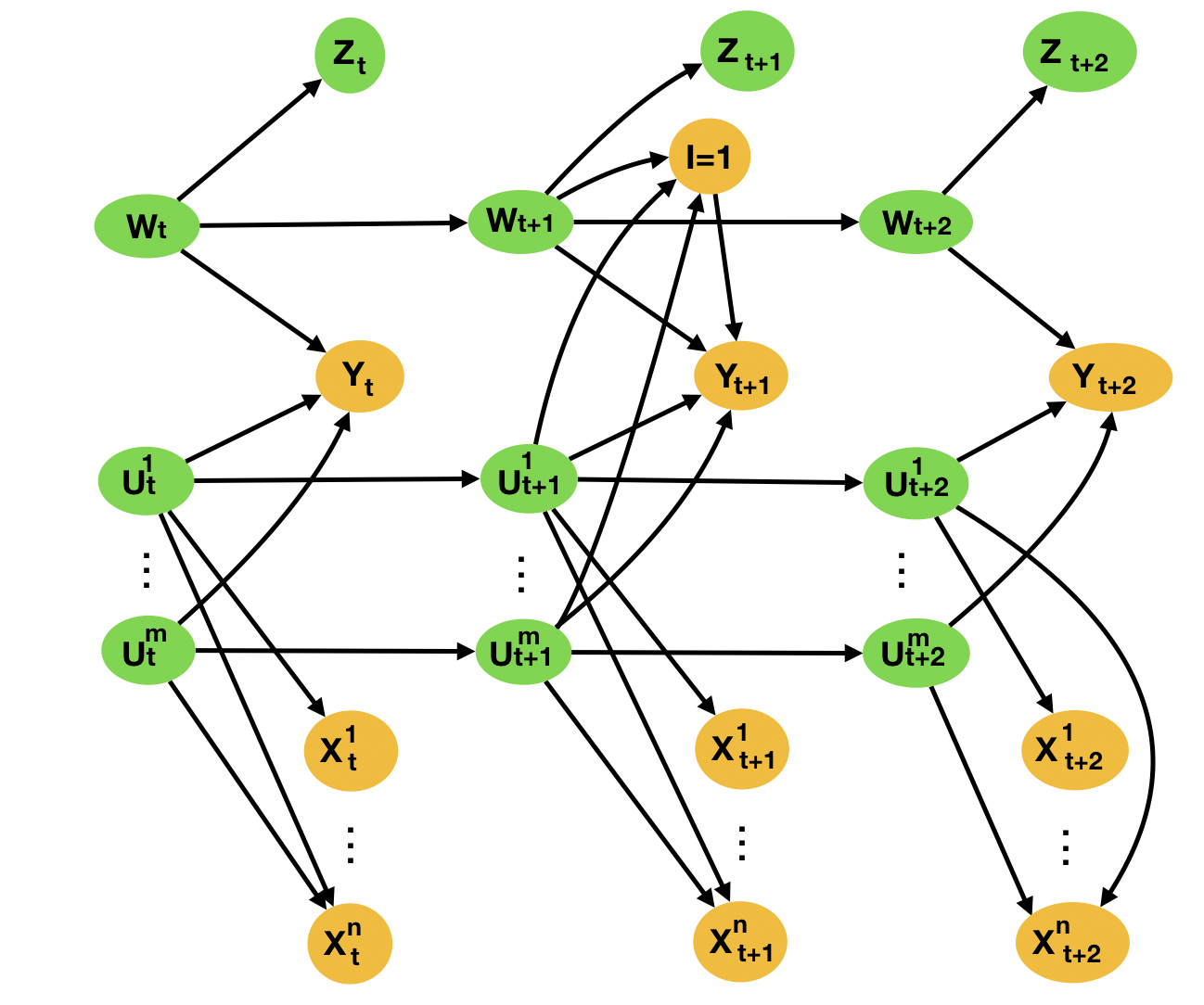}
        \caption{DAG for synthetic control model when there are latent causes of the treated unit that we don't have observed proxies for. Orange nodes are observed, green nodes are latent.}
        \label{fig:dag_missing}
    \end{figure}
    
    If there is a latent cause $w^t$ with unobserved proxies $z^t$, as graphically illustrated in Figure~\ref{fig:dag_missing}, this impacts our identification strategy. In this situation, the updated argument of Theorem~\ref{theorem: for a single time point} proceeds as follows. There exists a function $H$ such that $
        P(y^t \mid u^t, I^t, w^t)=\iint H(y^t, x^t, z^t, I^t) P(x^t, z^t \mid u^t, I^t, w^t)dxdz.$ This implies that:
 \begin{equation*}
        \begin{aligned}
       \mathbb{E}\left(y^t \mid u^t, I^t \right) &= \iiiint y^t H(y^t,x^t, I^t, z^t) P(x^t, z^t \mid u^t, I^t, w^t)P(w^t)dydxdzdw \\
       &\overset{A}{=} \iiiint y^t H(y^t,x^t, I^t, z^t) P(x^t \mid u_t)P(z^t \mid w^t)P(w^t)dydxdzdw \\
       &\overset{A}{=} \int{P(x^t \mid u^t)}\int\underbrace{\left[\int{y^tH(y^t, x^t, I^t, z^t)}dy\right]}_{:=g(x^t, I^t, z^t)}P(z_t)dzdx\\
       &=\int \mathbb{E}_{P(z^t)}\left(g(x^t, I^t, z^t)\right)P(x^t \mid u^t,  I^t)dx \\
       &=\mathbb{E}\left(\underbrace{\mathbb{E}_{P(z^t)}\left(g(x^t, I^t, w^t)\right)}_{:=h(x^t, I^t, P(z^t))} \mid u^t, I^t \right).
        \end{aligned}
    \end{equation*}
    Where lines $\text{A}$ follow as $x^t,z^t$ are independent of $I$, and $w^t, u^t$ respectively. Yielding the Theorem:
    
    \begin{thm}\label{theorem: sensitivity}
     The introduction of latent $w^t$ with no observable proxies, as graphically depicted in Figure~\ref{fig:dag_missing}, changes the estimand in Theorem~\ref{theorem: for a single time point} to:
    $$\mathbb{E}\left(y^t \mid \text{do}(I^t=0), I^t=1 \right) = \mathbb{E}\left(h(x^t, I^t=0, P(z^t)) \right).$$
    \end{thm}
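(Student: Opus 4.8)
The plan is to mirror the proof of Theorem~\ref{theorem: for a single time point} step for step, inserting the extra latent $w^t$ and its unobserved proxies $z^t$ wherever they enter, and tracking how the irreducibly-unobserved $z^t$ forces an averaging over its marginal $P(z^t)$. First I would invoke the same completeness-plus-technical-conditions argument that produced Eq.~\ref{equation: existence of H}, but now for the enlarged proxy set $(x^t, z^t)$ and the enlarged latent set $(u^t, w^t)$, yielding a function $H$ with
$$P(y^t \mid u^t, I^t, w^t) = \iint H(y^t, x^t, z^t, I^t)\, P(x^t, z^t \mid u^t, I^t, w^t)\,dx\,dz.$$
This is the analogue of the existence step in Theorem~\ref{theorem: for a single time point}; it relies on the completeness condition of Def.~\ref{definition: completeness} holding jointly for $(x^t, z^t)$ relative to $(u^t, w^t)$.

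Next I would compute $\mathbb{E}(y^t \mid u^t, I^t)$ by multiplying by $y^t$, integrating out $y^t$, and marginalising $w^t$ against its own distribution. The crucial graphical inputs, all read off the DAG in Figure~\ref{fig:dag_missing}, are: (i) $w^t$ is independent of $(u^t, I^t)$, so that $P(w^t \mid u^t, I^t) = P(w^t)$ and we may average $w^t$ over $P(w^t)$; and (ii) the proxy mechanisms factorise as $P(x^t, z^t \mid u^t, I^t, w^t) = P(x^t \mid u^t)\,P(z^t \mid w^t)$, because $x^t$ depends only on $u^t$ and $z^t$ only on $w^t$. Substituting these factorisations and collapsing $\int P(z^t \mid w^t)\,P(w^t)\,dw = P(z^t)$ lets me define the inner function $g(x^t, I^t, z^t) := \int y^t H(y^t, x^t, I^t, z^t)\,dy$ and then set $h(x^t, I^t, P(z^t)) := \mathbb{E}_{P(z^t)}(g(x^t, I^t, z^t))$, arriving at
$$\mathbb{E}(y^t \mid u^t, I^t) = \mathbb{E}\left(h(x^t, I^t, P(z^t)) \mid u^t, I^t\right),$$
the exact counterpart of line $\text{B}$ in Theorem~\ref{theorem: for a single time point}.

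From here the remainder of the argument is identical to Theorem~\ref{theorem: for a single time point}: I would pass to the interventional quantity via the twin-network d-separation of Figure~\ref{fig:twin net} (line $\text{C}$), use $x^t \perp I^t \mid u^t$ together with the independence of $z^t$ from $(u^t, I^t)$ (line $\text{E}$), and finally marginalise over $u^t$ to obtain $\mathbb{E}(y^t \mid \text{do}(I^t=0), I^t=1) = \mathbb{E}(h(x^t, I^t=0, P(z^t)))$, which is the claimed estimand.

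I expect the main obstacle to be conceptual rather than computational, and it is precisely what makes this a \emph{sensitivity} rather than an \emph{identifiability} result: because $z^t$ is never observed, it cannot be pulled out by conditioning on a latent the way $x^t$ is handled in Theorem~\ref{theorem: for a single time point}, so its influence survives only as an average over the full marginal $P(z^t)$. Making this rigorous requires (a) justifying the collapse $\int P(z^t \mid w^t)\,P(w^t)\,dw = P(z^t)$, which in turn needs the independence $w^t \perp (u^t, I^t)$ to be genuinely encoded in Figure~\ref{fig:dag_missing}, and (b) checking that the joint completeness condition for $(x^t, z^t)$ still delivers a well-defined $H$ even though $z^t$ will ultimately be integrated away. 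The dependence of $h$ on the entire distribution $P(z^t)$, rather than on a realised value, is the formal source of the bias and is exactly the handle that the subsequent sensitivity analysis exploits.
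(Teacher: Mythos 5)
Your proposal matches the paper's own proof essentially step for step: the same existence step for $H$ over the enlarged proxy set $(x^t,z^t)$, the same factorisation $P(x^t, z^t \mid u^t, I^t, w^t) = P(x^t \mid u^t)\,P(z^t \mid w^t)$ and collapse $\int P(z^t \mid w^t)P(w^t)\,dw = P(z^t)$, and the same definitions of $g$ and of $h(x^t, I^t, P(z^t))$ as an average of $g$ over $P(z^t)$. If anything, you are more explicit than the paper about the final passage through the twin-network d-separation and the marginalisation over $u^t$, which the paper leaves implicit after establishing the analogue of line $\text{B}$.
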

Hence our estimation of the counterfactual depends on the distribution $P(z^t)$. If this distribution is the same for all time periods, our estimation should proceed in an unbiased fashion. However, if there is a distribution shift in $P(z^t)$ between the pre-intervention period (time points for which $I=0$) and the post-intervention period (time points for which $I=1$) then our estimate for the effect of treatment on the treated could be biased. Why is this? Well when we learn the function $h$, we only have access to pre-intervention data. Hence, if the latent cause---and thus the unobserved proxies---undergo a distribution shift, this can bias our model, as the $h$ we learn depends on the distribution of the proxies at the time at which we learnt it, which is before the intervention was applied. The bias is thus given by: 
    \begin{equation} \label{equation: bias}
    \text{Bias} = \big\vert \mathbb{E}\left(h(x^t, I^t=0, P_{pre}(z^t)) \right) - \mathbb{E}\left(h(x^t, I^t=0, P_{post}(z^t)) \right) \big\vert
    \end{equation}

%Bias is the difference between the estimated causal effect given the synthetic control candidates, and the ground truth causal effect. Such bias can arise when the relationship between observed synthetic control candidates, $\{x^i\}$, and target variable, $y$, in the pre-intervention period does not continue in the post-intervention period. For instance, the observed synthetic control candidates may not capture all drivers of $y$, such as $W$ in figure~\ref{fig:dag}, and the distribution of $W$ could shift from the pre-intervention period to the post-intervention period. Thus, $P(\Tilde{y}_{t:m+1}| y_{m:1}, x^1_{t:1}, \dots, x^n_{t:1})$ cannot tell us about heterogeneity in $W$ in the post-intervention period. To perform sensitivity analysis, we will aim to simulate such heterogeneity by subjecting $P$ to a distributional shift. %Indeed, any heterogeneity in $W$, if observed, would have lead to a distributional shift in $P$, taking us to some distribution $Q$. 

%For sensitivity analysis, we need to understand how specific levels of unobserved heterogeneity lead to bias. We take the distance between $P$ and a hypothetical $Q$ to quantify the unobserved heterogeneity and proceed to compute the bias given specific levels of unobserved heterogeneity. Given a way to compute worst-case bias for a given distance, we can plot how bias varies as a function of ``strength of heterogeneity,'' the analogue of the plots used in propensity-based causal inference. Based on this discussion, we formulate the worst-case bias for a given level of heterogeneity as follows.
    
    \subsection{Bounding bias in standard linear synthetic control models}
    
    When we consider the case where $h$ is a linear function of the proxies---the standard case employed in previous works, see \cite{abadie2010synthetic}---the bias takes on a simpler form. That is, if the $z^t$ are the unobserved proxies of the latent $w^t$, as shown in the DAG in Figure~\ref{fig:dag_missing}, then we can write:
    
    $$
    \begin{aligned}
    h(x^t, I=0, P(z^t))  &= \mathbb{E}_{P(z^t)}\left(\sum_i^N \beta_i x^t_i + \sum_j^M\gamma_j z_j^t\right) = \sum_i^N \beta_i x^t_i + \sum_j^M\gamma_j\mathbb{E}_{P(z^t_j)}\left(z^t_j\right)
    \end{aligned}
    $$
    In this case, Theorem~\ref{theorem: sensitivity} implies that the bias can be written as:
    $$
    \begin{aligned}
    \text{Bias} &= \big\vert \mathbb{E}\left(h(x^t, I=0, P_{pre}(z_t)) \right) - \mathbb{E}\left(h(x^t, I=0, P_{post}(z_t)) \right)\big\vert \\
    &=\bigg\vert \sum_j^M\gamma_j\mathbb{E}_{P_{pre}(z^t_j)}\left(z^t_j\right) - \sum_j^M\gamma_j\mathbb{E}_{P_{post}(z^t_j)}\left(z^t_j\right)\bigg\vert \\
    &\leq M \max_j(\vert \gamma_j\vert) \max_j\left(\vert \mathbb{E}_{P_{pre}(z^t_j)}\left(z^t_j\right) - \mathbb{E}_{P_{post}(z^t_j)}\left(z^t_j\right) \vert\right)
     \end{aligned}
    $$
    
As this bound on the bias is in terms of latent quantities, an analyst will need to make \emph{plausibility judgments} in order to devise a bound in terms of observable quantities. Indeed, if an analyst believes they have not missed latent causes as important to our problem as the ones they included proxies for, then we can upper bound the bias in the worst case by taking the maximums in the above bound on the bias to be the maximums in the \textit{observed} proxies. This then leads to the following upper bound on the bias in terms of observable quantities:
    \begin{equation} \label{equation: linear bias bound}
    \begin{aligned}
    \text{Bias} \leq N \times \max_i(\vert \beta_i\vert)\times \max_i\left(\vert
    \mathbb{E}\left(x^{pre}_i\right) - \mathbb{E}\left(x^{post}_i\right) \vert\right)
     \end{aligned}
    \end{equation}

\section{Experiments}
We now assess the validity of this bound on a series of synthetic and real world data. 
Using simulations, we investigate our bound in a valid and invalid setting. Moreover, we test on the California Tobacco Tax and German Reunification data-sets to demonstrate the bound in a real world setting.

\subsection{Synthetic Experiments}

Our synthetic experiments are constructed such that the unobserved latent $w$ experiences a distribution shift after the intervention, leading to bias as defined in Equation~\ref{equation: bias}. To test validity of our bound in Equation~\ref{equation: linear bias bound}, we consider two examples: one where the plausibility bounds are satisfied, illustrated in Figure~\ref{fig:simulation}(a), and one where they are violated, illustrated in Figure~\ref{fig:simulation}(b). Data generation is outlined in Appendix~\ref{section: appendix simulation exp}. Results and discussion are in the caption of Figure~\ref{fig:simulation}.

\begin{figure}  
\centering
\subfigure[
    \label{fig:first}]{
        \includegraphics[width=0.4\linewidth]{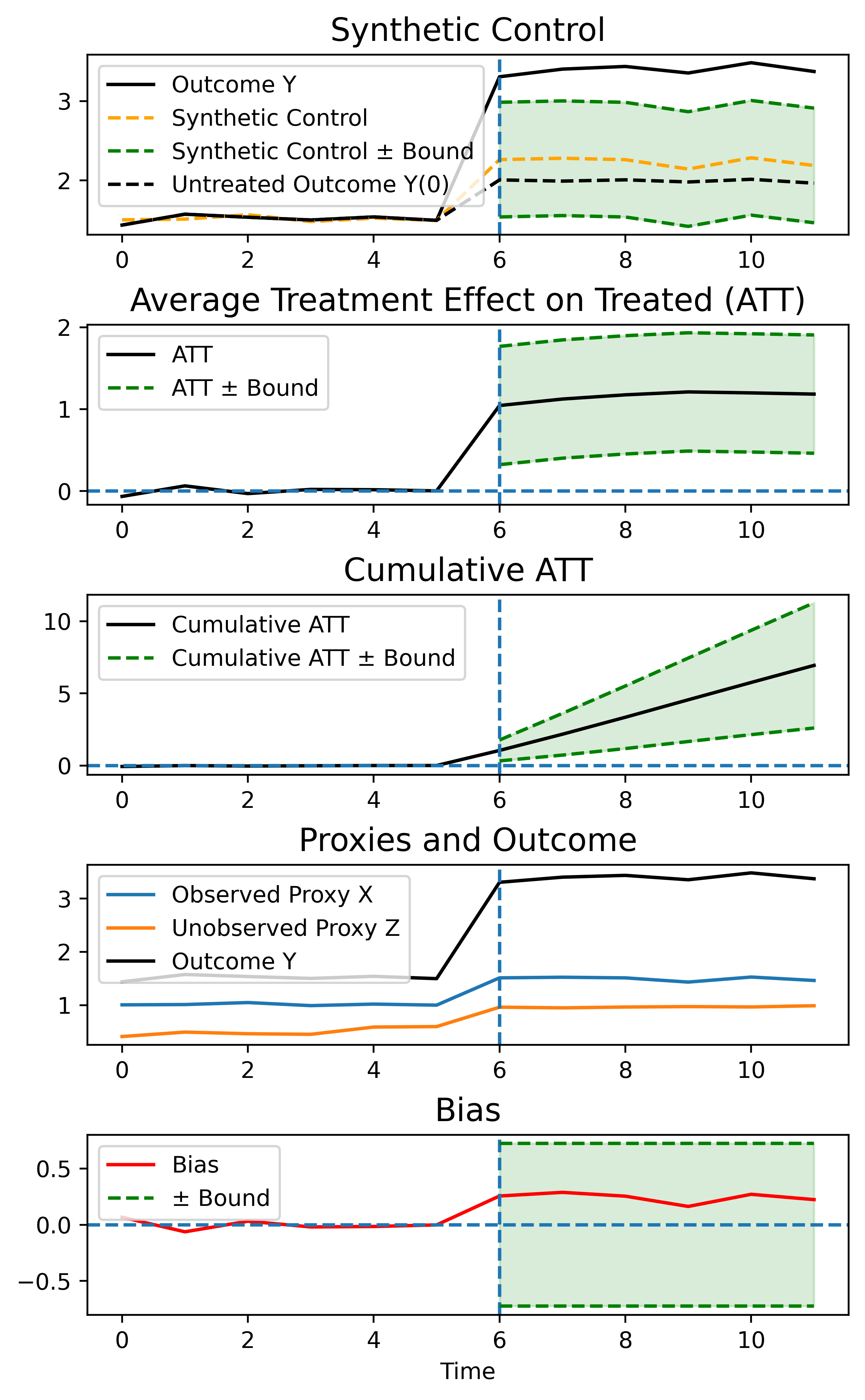}
        }
        \subfigure[
    \label{fig:second}]{
        \includegraphics[width=0.4\linewidth]{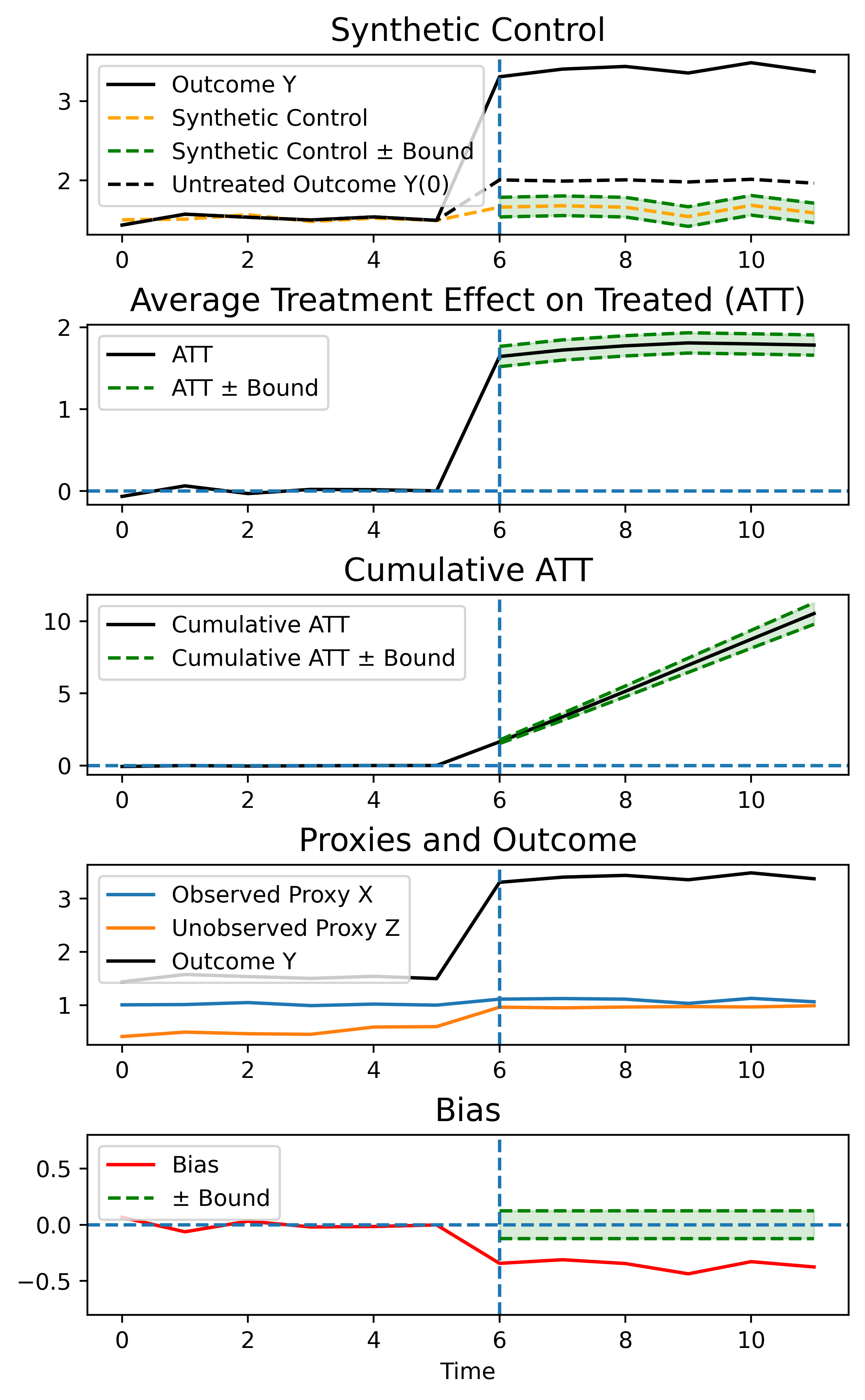}
        }

  \caption{\textbf{Simulation to evaluate the validity of bound} For Scenario (a) we have a \textit{valid} bound and for Scenario (b) an \textit{invalid} bound on the bias, i.e. the bounds do not capture the true bias (red, bottom row). The intervention occurs at $T=6$. The first row shows the outcome of interest, as well as its untreated state $Y \mid \text{do}(I=0)$, written as $Y(0)$ in the figure and depicted in black. The synthetic control is shown as a dashed orange line contained by the bounds in green. The second row shows the average treatment effect on the treated (ATT) by subtracting the synthetic control from the observed outcome and averaging. The third row shows the cumulative effect over time. Row four shows the progression of all proxies, observed and unobserved as well as the outcome over time. The last row shows the bias as defined by the true untreated outcome subtracted from the synthetic control. 
  \textbf{(a)}: The bias (red) is contained by the bounds. Given the bounds do not contain 0 in the ATT plot, the effect measured is still positive even if we had a worst case bias, given our assumptions. Thus, when our plausibility assumptions are satisfied, so too is our bound on the bias.
  \textbf{(b)}: The observed proxy $X$ shifts less during the intervention time (blue) such that the bound is smaller as we measure a smaller change in the proxies. As a consequence, the bias (red) is outside the bounds. Hence, when our plausibility assumptions are violated, so too is our bound on the bias. %(green which is expected as the second condition is violated due to the mean change in the observed proxies $X$ being smaller than the mean change in the unobserved proxies $Z$ of the latent $w$).
 }
  \label{fig:simulation}
\end{figure}

\subsection{Real Data}
In our first experiment on real data, we look into a tobacco tax increase of 25 cents introduced in California in 1988 \cite{abadie2010synthetic}. We build a synthetic control to predict the untreated annual per-capita cigarette sales of California, using sales data from the states used in the literature: Colorado,  Connecticut, Montana, Nevada, Utah. For our second experiment we refer to the 1990 reunification of West and East Germany  in~\cite{abadie2015comparative}. Here, we build a synthetic control model to predict the untreated GDP of West Germany using GDP data from the countries used in the literature: USA, Austria, Netherlands, Switzerland, Japan. We run a linear regression for each synthetic control, without intercept and allowing for negative coefficients. %To account for the linear trend in both time-series, the change in mean for the proxies is calculated as the difference between the mean of the first and second half of the pre-intervention stage, with the same procedure applied to the post-intervention stage. We further limit the mean change calculation to 2 time-steps, in each the pre and post-intervention stage. 
In line with \cite{brodersen2015inferring}, we calculate the ATT as a running average in $T_{post}$.

\begin{figure}  
\centering
\subfigure[
    \label{fig:real_first}]{
        \includegraphics[width=0.4\linewidth]{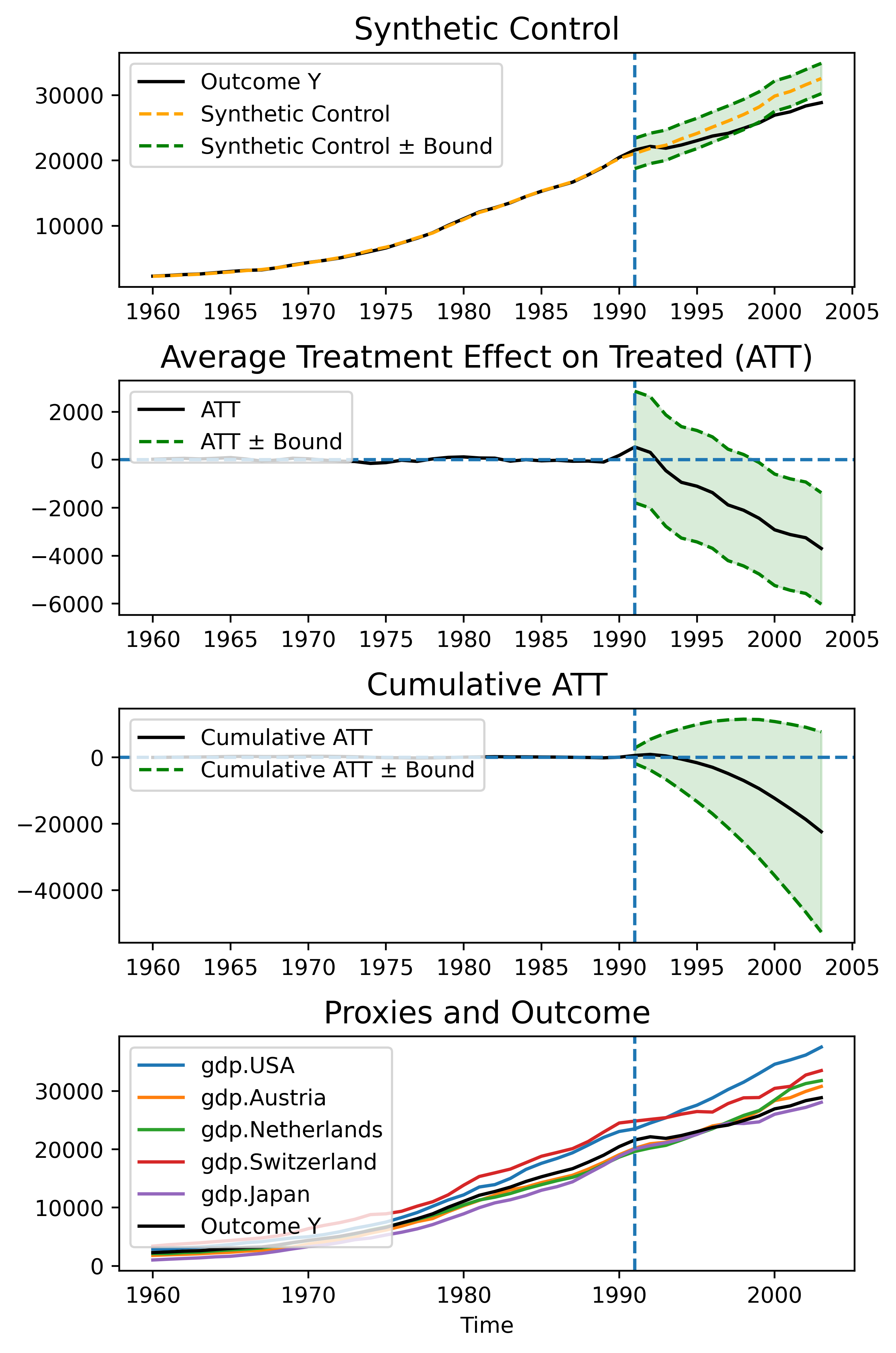}
        }
        \subfigure[
    \label{fig:real_second}]{
        \includegraphics[width=0.4\linewidth]{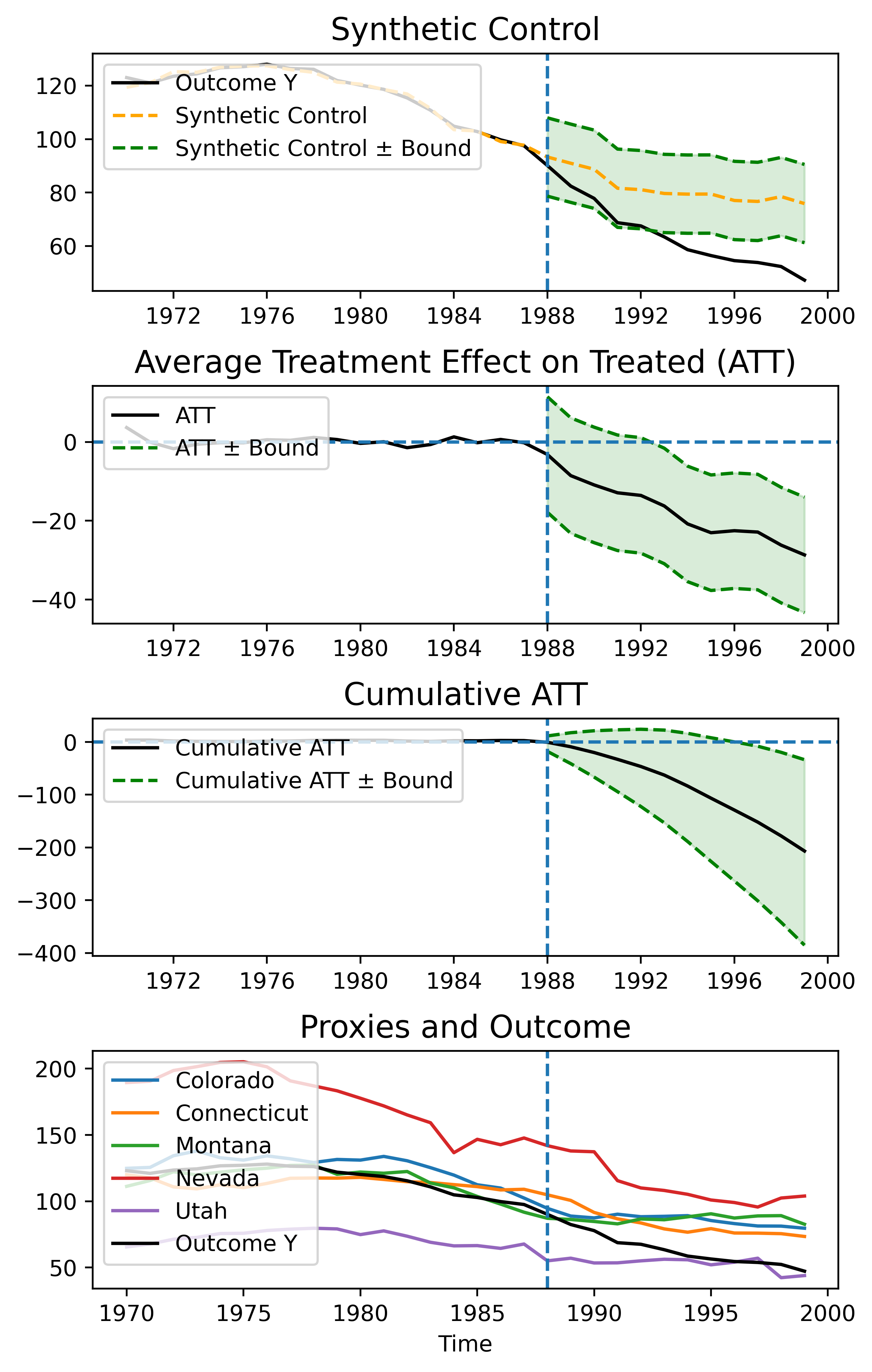}
        }

  \caption{\textbf{Real World Examples} 
  \textbf{(a)}: German Reunification: The reunification occurred in 1991 leaving West Germany with a visible drop in its previous growth. Our bound of 2321.84 does not allow to conclude that, given our assumptions, the average ATT is not reversed by other latent variable shifts by $u$. The bound on the bias is primarily driven by the high levels of variation in Switzerland (red, bottom plot) which around the time of the reunification also changed into a phase of slower growth.
  \textbf{(b)}: California Tobacco Tax: Our bound of 14.65 is smaller than the average ATT of -17.45 and allows us to conlcude, that given the assumptions, that even if there was an unobserved bias as big in contribution as the observed proxies, the negative causal effect would persist. Based on that reasoning, the California Tobacco Tax can be deemed as effective in reducing smoking.
 }
  \label{fig:examples}
\end{figure}

Table \ref{table:1} shows the bounds our method yields. For German Reunification, we have $N=4$ as Japan's coefficient is zeroed. The biggest beta coefficient corresponds to Austria with 0.46. The proxy change is 1252, yielding a bound of 2321.84. With an average ATT of -1726.8, given our assumptions, this bound on the bias tell us that the causal effect we have estimated is very sensitive. This is, in the worst case, the causal estimate in this case can be entirely due to a shift in an unobserved latent. For California Tobacco, we have $N=4$ as the regression zeroes the coefficient on Utah, leaving 4 proxies. Montana has the biggest regression coefficient with 0.4. The maximum change in the proxies is 9.1, yielding a bound of 14.65, which is smaller than the average ATT of -17.45. In contrast to the German Reunification example, our bound in this example allows us to conclude that---given our assumptions---even with the worst case bias, the tobacco tax will still have a negative causal effect. See Figure \ref{fig:examples} for the corresponding synthetic control plots.

\begin{table}[]
\begin{tabular}{|l|l|l|l|l|l|}
\hline
\textbf{Data}   & \textbf{N} & \textbf{Max. Beta} & \textbf{Max. Proxy Change} & \textbf{Bound} & \textbf{average ATT} \\ \hline
Germany, 1990    & 4          & 0.46              &  1252                        & $\pm$2321.84. & -1726.80        \\ \hline
California, 1981 & 4          & 0.4               & 9.1                    & $\pm$ 14.65     & -17.45     \\ \hline
\end{tabular}
\caption{Bounds on bias of ATT for the California Tax and German Reunification Example}
\label{table:1}
\end{table}

By design, sensitivity analysis is a subjective method, as it relies on domain expert knowledge to make a judgement on the empirical evidence given. Our method offers a conservative upper bound on the bias, where both the maximum beta and proxy change are empirical, and the domain expert is left with the decision which proxies to incorporate into their analysis. Effectively, this is  equivalent to the unobserved parameters commonly introduced to models in classical sensitivity analysis \cite{imbens2003sensitivity}, on which the expert has to make their judgement. Here, our aim was not a final judgement on the real world examples shown, but instead to demonstrate of how to enrich expert discussion with our bound for any synthetic control analysis. Ultimately, it is the expert that has to make \textit{plausibility judgements} in the scientific discourse, and these bounds are a necessary addition to understand robustness against bias.

%\begin{figure}[t]
%        \centering
%        \includegraphics[scale=0.35]{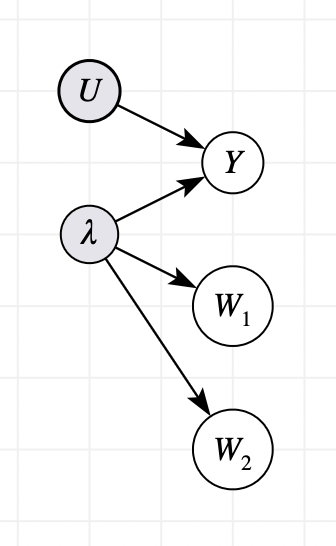}
%        \caption{Example setup: one unobserved latent $\lambda$, with two observed proxies $W_1, W_2$, one latent $U$ which together with $\lambda$ constitutes the two parents of outcome $Y$}
%        \label{fig:dag1}
%    \end{figure}

\section{Conclusion}
One of the most widely used causal inference approaches are \emph{synthetic control} methods. However, in all previous identifiability proofs, it is implicitly assumed that the underlying assumptions are satisfied for all time periods both pre- and post-intervention. This is a strong assumption, as models can only be learned in pre-intervention period. In this paper we addressed this challenge, and proved identifiability without the need for this assumption by showing it follows from the principle of invariant causal mechanisms. Moreover, for the first time, we formulated and studied synthetic control models in Pearl's structural causal model framework. Importantly, we provided a general framework for sensitivity analysis of synthetic control models to violations of the assumptions underlying non-parametric identifiability. We concluded by providing an empirical demonstration of our sensitivity analysis approach on real-world data.

\bibliography{references.bib}
%\newpage

\appendix

\section{Technical conditions for proofs}
The proof of Theorem~\ref{theorem: for a single time point} relies on some technical conditions which we now overview. See Appendix C of \cite{shi2021proximal} and references therein for further details. In order to prove the existence of the function $H^t$, we need the following.

Consider the space of all square-integrable functions $s$, denoted $L^2\{F(s)\}$, with respect to a cumulative distribution function $F(s)$. This is a Hilbert space with inner product given by $\langle f,g \rangle = \int f(s)g(s)dF(s)$. Let $K_x$ denote the conditional expectation operator 
$L^2\{F(w|x)\}\rightarrow L^2\{F(\lambda |x)\}$, with $K_xh = E[H(w) | \lambda, x]$ for $H \in L^2\{F(w|x)\}$, and let $ (\tau_{x,n}, \varphi_{x,n}, \psi_{x,n})_{n=1}^\infty $ denote a singular value decomposition of $K_x$. Given the following regularity conditions:
\begin{enumerate}
    \item $\iint f(w | \lambda, x)f(\lambda | w, x)dwd\lambda < \infty$
    \item $\int f^2(y | \lambda, x)f(\lambda | x)d\lambda < \infty$
    \item $\sum_{n=1}^\infty| \langle f(y | \lambda, x), \psi_x,n \rangle |^2 < \infty$
\end{enumerate}
Then Picard’s theorem implies the existence of the required function $H^t$ in Theorem~\ref{theorem: for a single time point}.

\section{Definitions of Synthetic Experiments} \label{section: appendix simulation exp}
Our synthetic experiments are constructed such that the unobserved latent $w$ experiences a distribution shift after the intervention, leading to bias as defined in Equation~\ref{equation: bias}. 

\begin{equation}
     \epsilon \sim \mathcal{N}(0,1), u \sim \mathcal{N}(1,1) 
 \end{equation}
\begin{equation}
     Y = 
    \begin{cases}
    a u + b w + \epsilon, &\text{in } T_{pre} \text{ , } w \sim Bin(1/2) \\
    a u + b w + 2 I + \epsilon,&\text{in } T_{post} \text{ , } w \sim Bin(1) \\
\end{cases}
    \end{equation}
    
    \begin{equation}
Z = 
    \begin{cases}
    d w + \epsilon,& \text{in } T_{pre} \text{ , } w \sim Bin(1/2) \\
    d w + \epsilon,& \text{in } T_{post} \text{ , } w \sim Bin(1) \\
\end{cases}
 \end{equation}

\begin{equation}
    X =
    \begin{cases}
    cu + \epsilon,& \text{in } T_{pre} \text{ , } \epsilon \sim \mathcal{N}(0,1) \\
    cu + \epsilon,& \text{in } T_{post} \text{ , }  \epsilon \sim \mathcal{N}(0.5,1) \\
\end{cases}
\end{equation}

\begin{equation}
    I =
    \begin{cases}
    0,& \text{in } T_{pre} \\
    Bin(sigmoid(u))+ \epsilon,& \text{in } T_{post} \\
\end{cases}\\
    % \end{aligned}
    \end{equation}
    % \vspace{-5mm}

Given this data generation process, we have:
%  $$
 
%  $$
 
%  \begin{multicols}{2}
%     \begin{equation}

$$
    \textbf{ Synthetic Control } \E(Y) = \frac{a}{c}\E(X) + \frac{b}{d}\E(Z), 
\quad
      \textbf{True Bias} = |b\{ \E_{pre}(w) - \E_{post}(w) \}|,\\
 $$
 $$
 \textbf{ Proxies Bias  } =  |\frac{b}{d}\{ \E_{pre}(Z) - \E_{post}(Z) \}|,
\quad
    \textbf{Our bound on bias} = |\frac{a}{c}\{ \E_{pre}(X) - \E_{post}(X) \}|
   $$
    
Following from the above equation for the Synthetic Control $\E(Y)$, our bound on the bias holds if the following conditions hold: (1) $\frac{a}{c} > \frac{b}{d} $, i.e the \textit{weighting} of the contribution of the mean of the unobserved proxies $Z$ is smaller than of the mean of the observed proxies $X$. (2) The change in the \textit{mean} of proxies $X$ is bigger than the change in the \textit{mean} of unobserved proxies $Z$.

Given access to the unobserved $w$ and its proxies $Z$ through this simulation, we can validate the bounds. Setting $(a,b,c,d)=(1,0.5,1,0.5)$, Figure \ref{fig:simulation} showcases the aforementioned conditions in our synthetic setting under a valid and a invalid bound scenario. For Scenario (a), our maximum mean change in proxies $X$ is 0.48 (not exactly 0.5 due to noise terms), $N$ is 1 and the OLS coefficient is 1.47, such that the $ \text{bias} \leq 1 \times  1.47 \times  0.48 = 0.71$.
As expected, the bias (red) is captured by the bounds as both conditions are fulfilled, see bottom graph of Figure \ref{fig:simulation} (a). 

For Scenario (b), if we change the noise term on X in the post intervention stage to $\epsilon \sim \mathcal{N}(0.1,1)$, causing a violation of the second condition, we have a mean change of 0.08 (not 0.1 due to noise), leading to a $\text{bias} \leq 1 \times  1.47 \times  0.08 = 0.12$.
Hence, the bounds are smaller, but more importantly also invalid as they do not contain the true bias. 

Having chosen a simple example for effective exposition, we would like to emphasize that the validity (and invalidity) of the bounds in these scenarios naturally extend to more complex scenarios with higher number of latents $u$ and proxies $X$.

%\section{Real World Examples: German Reunification and California Tobacco Tax}

\end{document}